\newcommand{\BfGOP}[1]{\mathop{\mathchoice%
{\raise-0.22em\hbox{\huge $#1$}}%
{\raise-0.05em\hbox{\Large $#1$}}{\hbox{\large $#1$}}{#1}}}
\newcommand{\bigtimes}{\BfGOP{\times}}
\newcommand{\R}{\mathbb{R}}
\newcommand{\E}{\mathbb{E}}
\newcommand{\HH}{\mathcal{H}}
\newcommand{\pre}{\text{pre}}
\newcommand{\suc}{\text{suc}}
\theoremstyle{plain}
\newtheorem{theorem}{Theorem}
\newtheorem{remark}[theorem]{Remark}
\newtheorem{definition}[theorem]{Definition}
\newtheorem{lemma}[theorem]{Lemma}
\newtheorem{example}[theorem]{Example}
\newtheorem{question}[theorem]{Question}
\begin{document}
\begin{flushright}
    ZMP-HH / 15-8 \\
    Hamburger Beitr{\"a}ge zur Mathematik Nr. 538 
  \end{flushright}
	\vspace{0.5cm}
	\begin{center}
	\Large
	Nash Equilibria And Partition Functions \\ Of Games With Many Dependent Players
	\end{center}
	\vspace{0.5cm}
  \begin{center}	
    Elisabeth Kraus,
		TQA Research Group, University of Munich.
  \end{center}
	\begin{center}	
    Simon Lentner\footnote{Corresponding author:
      \texttt{simon.lentner@uni-hamburg.de}},
    Algebra and Number Theory, University of Hamburg.
  \end{center}
	\vspace{0.5cm}

\begin{abstract}
	We discuss and solve a model for a game with many players, where a subset of truely 
	deciding players is embedded into a hierarchy of dependent agents. 
	
	These interdependencies modify the game matrix and the Nash equilibria for the 
	deciding players. In a concrete example, we recognize the partition function of the 
	Ising model and for high dependency we observe a phase transition to a new Nash 
	equilibrium, which is the Pareto-efficient outcome. 
	
	An example 
	we have in mind is the game theory for major shareholders in a stock market, where 
	intermediate companies decide according to a majority vote of their owners and compete 
	for the final profit. In our model, these interdependency eventually forces cooperation.
\end{abstract}

\tableofcontents

\section*{Acknowledgements}
Both authors would like to thank Prof. Martin Schottenloher and the research group TQA at the LMU Munich for hospitality and support. The second author is supported by the DFG RTG 1670, University of Hamburg.

\section{Introduction}
Roughly 20 years ago, an exciting new set of methods has been introduced into
game theory: An underlying game $G$, such as the Minority Game, played by a
large ensemble of players, can be analyzed and often solved using methods from
statistical physics. For comprehensive overviews on the development of this
subject, see \cite{CMZ} or \cite{Coolen}. The analysis exhibits critical points
where phase transitions appear in the thermodynamic limit of many players and it 
provides models for the emergence of mutual cooperation. The evolution of a large set of agents with prescribed strategies or learning mechanisms have been studied as dynamical systems e.g. in \cite{ChalletZhang}, \cite{Coolen} or \cite{NM}.\\ 

In this article, we want to focus on the influence of a large
ensemble of agents on the game theory of a given game $G$, including
game matrix and Nash equilibria. In the model we study, the game $G$ is
embedded into a hierarchy of automata/agents, that pass decisions by majority
votes. The top of the hierarchy remains a set of new active/deciding
players, which now play a new transformed game $\Gamma$. We then wish to understand how the game theory of $\Gamma$ compared to $G$ changes, depending on the given intermediate hierarchy. This has been solved by the first author as part of her diploma thesis \cite{DA}.\\

More specifically, in section 2 we suppose that we are given a
weighted directed graph $\HH$ and a game $G=\langle L, S,u \rangle$ played by a subset of the vertices
$\HH_V$ of $\HH$ called \emph{executive players} $L\subset \HH_V$. We define a
transformed game $\Gamma = \langle \Lambda, \Sigma, \nu \rangle$ with new
players $\Lambda\subset \HH_V$. The graph hereby is imagined as a hierarchy of
agents with executive players $L$ at the bottom of the hierarchy, while new
\emph{deciding players} $\Lambda$ at the top of the hierarchy successively control the
behaviour of the agents according to their influence. Conversely, the \emph{payoff of
$G$} for all executive players $L$ is finally collected by the deciding players
according to the natural bargaining process in this situation (Shapely value). 
Other payoff mechanisms are possible and got discussed in \cite{DA} as well.\\

The main example for this model we have in mind is the stock market, where the deciding players $\Lambda$ send instructions through a graph that represents the
structure of the mutual ownerships of the companies. Finally, some executive companies $L$ play a executive game $G$ in ``reality'', such as prisoner's dilemma or minority game. So we ask, how the stock market and mutual ownerships of the intermediate
agents/companies have altered the game $G$ to $\Gamma$. \\

To solve $\Gamma$ thermodynamically, we need more than a partition function
summing over all possible strategies as done e.g.\ in \cite{Coolen}. Rather, the
conditional probabilities of one agent's decision influencing another one have
to be calculated. They correspond physically to $k$-point correlation functions (see definition \ref{def_k-point}) and especially for $\Lambda=\emptyset$ (no deciding players) the overall expression reduces to the partition function.\\ 

In section 3, we recognize that for treelike hierarchy graphs our $k$-point correlation functions coincide with a generalized Ising model on the graph $\HH$. This enables us in principle to write down the game matrix, Nash equilibria and phase transitions of the game $\Gamma$ for a given game $G$ whenever the Ising model for $\HH$ is accessible. Of particular interest to us is the case where $\HH$ is a random graph, which has been solved in \cite{Dommers}. \\

In section 4 we demonstrate the approach and
methods developed in this article. We solve and thoroughly analyse an example of
an executive prisoners dilemma $G$ being transformed to a hierarchical game $\Gamma$
with again two deciding players, but with a certain hierarchy of agents between
deciders and executive $G$-players.\\

Especially, we can establish a phase transition in the game $\Gamma$, if the
branching factor of the tree is sufficiently high (otherwise we get only a tipping point)
as the mutual influence approaches a critical threshold. The phase
transition in $\Gamma$ separates a phase with the (defecting) Nash equilibrium in
$G$ from a phase corresponding to the (cooperating) Pareto-efficient outcome.
Roughly spoken, if the mutual dependency in decision making gets high, egoistic
strategies become unstable and mutual cooperation emerges.

\section{Definition Of The Hierarchical Game} \label{definition}

In the following we suppose to be given a weighted directed graph $\HH$, whose vertices 
contain among others executive players $L$ playing a game $G$. We suppose $G$ to have only two moves. The graph should be imagined as a hierarchy with executive players $L$ at the bottom of the hierarchy. The key notion of this article is then a transformed game $\Gamma = \langle \Lambda, \Sigma, \nu
\rangle$ with new deciding players $\Lambda$ at the top of the hierarchy, who
successively control the behaviour of the agents and collect the $G$-payoff
according to their influence. 

\begin{figure}[H]
\begin{center}
\begin{pspicture}(-3,0.5)(8,5)
\cnode*(1,1){2pt}{A}
\rput(0.5,1){\rnode{A'}{1}}
\cnode*(3,1){2pt}{E}
\rput(3.5,1){\rnode{E'}{2}}
\cnode*(5,1){2pt}{B}
\rput(5.5,1){\rnode{B'}{3}}
\rput(-1.2,1.2){executive}
\rput(-1.2,0.8){players L:}
\rput(-1.2,4.2){deciding}
\rput(-1.2,3.8){players \(\Lambda\):}
\cnode*(1,4){2pt}{C}
\rput(0.5,4){\rnode{C'}{\(\lambda_1\)}}
\cnode*(3,4){2pt}{F}
\rput(3.5,4){\rnode{F'}{\(\lambda_2\)}}
\cnode*(5,4){2pt}{D}
\rput(5.5,4){\rnode{D'}{\(\lambda_3\)}}
\psframe[framearc=0.2](-0.1,0)(6,1.4)
\rput(3,0.4){\rnode{S}{L playing the executive game $G$}}
\fontfamily{hlx}\selectfont 
\rput[bl](1.8,2.05){\pscharpath[fillstyle=solid,fillcolor=lightgray,
linewidth=1pt]{\fontsize{1cm}{1cm} \selectfont graph}}
\pnode(2,1.5){a}
\pnode(1.5,2){a'}
\ncline{->}{a}{A}
\ncline{->}{a'}{A}
\pnode(4,1.5){b}
\pnode(4.5,2){b'}
\ncline{->}{b}{B}
\ncline{->}{b'}{B}
\pnode(1.5,3){c}
\pnode(2,3.5){c'}
\ncline{<-}{c}{C}
\ncline{<-}{c'}{C}
\pnode(4.5,3){d}
\pnode(4,3.5){d'}
\ncline{<-}{d}{D}
\ncline{<-}{d'}{D}
\pnode(2.5,3.5){f}
\pnode(3.5,3.5){f'}
\ncline{<-}{f}{F}
\ncline{<-}{f'}{F}
\pnode(2.5,1.5){e}
\pnode(3.5,1.5){e'}
\ncline{->}{e}{E}
\ncline{->}{e'}{E}
\pnode(-2.8,4){V}
\pnode(-2.8,1){V'}
\ncline{->}{V}{V'}\ncput*{instructions for G-moves}
\pnode(7.3,4){W}
\pnode(7.3,1){W'}
\ncline{<-}{W}{W'}\ncput*{payoff floating back}
\end{pspicture}
\end{center} 
\end{figure}

\begin{definition}[Executive Game $G$]
  From now on, let $G=\langle L, \{S_i\}_{i\in L}, \{u_i\}_{i\in L} \rangle$ be
  a game in normal form with  players $L=\lbrace 1 \ldots n \rbrace$ and each
  player $i\in L$ having two strategies $S_i=S=\lbrace \pm 1 \rbrace$. The
  overall strategy set is hence 
  $S^L = \underset{i\in L}{\bigtimes} S_{i}=\lbrace  \pm 1 \rbrace ^n$ and we
  denote the payoff for each player $i\in L$ by $u_i:S^L\longrightarrow  \R$. 
\end{definition}

We denote by $A^B$ the set of all maps between $A$ and $B$ and by $\R A$ the vector space spanned by the set $A$.

\begin{definition}[Hierarchy Graph $\HH$]
  Let $\HH=\left(\HH_V,\HH_E,\{f_{vw}\}_{vw\in\HH_E}\right)$ be a
  connected, directed, weighted graph with vertex set $\HH_V$ and directed
  edges $vw\in\HH_E$ with positive weights $f_{vw}>0$ for $v,w\in \HH_V$.\\
  
  We denote the direct predecessors and successors of vertices $w,v\in\HH_V$ by
  $$\pre(w)=\lbrace  v \in \HH_V \ \vert \ vw \in \HH_E \rbrace
  \qquad \suc(v)=\lbrace  w \in \HH_V \ \vert \ vw \in \HH_E \rbrace.$$ 
  
  We further denote by $\HH_0\subset \HH_V$
  all vertices without predecessors and without loss
  of generality we assume the predecessor weights to be normed:
  $$\forall \ w\in \HH_V \setminus \HH_0: \; \sum_{v\in\pre(w)} f_{vw}=1$$
\end{definition}

\begin{definition}(Hierarchical Game $\Gamma=\HH G$)
Suppose a fixed game $G=\langle L, S, u \rangle$ and a fixed hierarchy graph
$\HH$ with $L\subset \HH_V$. The transformed hierarchical game $\Gamma = \HH
G:=\langle \Lambda, \Sigma, \nu \rangle$ consists of
\begin{itemize}
 \item A set of \emph{deciding players} $\Lambda:=\HH_0= \lbrace \lambda_1 \ldots
    \lambda_m \rbrace$.
 \item A \emph{strategy set} $\Sigma_\lambda=\Sigma=S^L$ for each deciding player
    $\lambda\in\Lambda$. Such a strategy formulates a
    $G$-strategy-command to each executive player $i\in L$. The
    overall strategy set is hence $\Sigma^\Lambda$.
 \item A \emph{payoff function} $\nu_\lambda:\Sigma^\Lambda\rightarrow \R$ for each
    deciding player $\lambda\in \Lambda$ given by 
    \begin{equation} \nu_\lambda=\left( \sum_{i\in L}\phi_\lambda^{(i)}
      \cdot u_i\right)\circ \pi\circ P_{L|\Lambda}. \end{equation}
    The function $\pi:\R\Sigma^L=\R(S^L)^L\rightarrow \R S^L$ is given by restricting a
    set of $G$-strategy-commands for each executive player
    $\left(\sigma_i^{(j)}\right)_{i,j\in L}\in \Sigma^L$ to the strategies
    chosen for the respective player $\left(\sigma_i^{(i)}\right)_{i\in
    L}\in S^L$.\\

    The functions $P_{L|\Lambda}:\R\Sigma^\Lambda\rightarrow \R\Sigma^L$ and
    $\phi_\lambda: \Lambda\rightarrow \R^L$ depending on the hierarchy
    $\HH$    will be defined in what follows:	
    \begin{itemize}
     \item $P_{B|A}:\R\Sigma^A\rightarrow \R\Sigma^B$ for subsets $A,B\subset
	\Sigma$ denotes the \emph{conditional influence}
	of players $A$ on players $B$ and should be read as a
	($|A|+|B|$)-point-function. A deciding player $\lambda\in\Lambda$ has been defined to have a strategy 
	$$\sigma_\lambda
	  =(\sigma_\lambda^{(i)})_{i\in L}
	  \in S^L=:\Sigma$$
        	formulating the aim to have each executive player $i$ using strategy
	$\sigma_{\lambda}^{(i)}$. These $G$-strategy-commands
	$\sigma=(\sigma_\lambda)_{\lambda\in\Lambda}\in \Sigma$ of all
	deciding players $\lambda$ compete along the hierarchy graph and
	determine an overall outcome probability distribution 
	$P_{L|\Lambda}(\sigma)\in \R\Sigma^L$ as described in the next section.
      \item $\phi_\lambda:\Lambda\rightarrow \R^L$ describes, how much of the
	payoff
	earned by each	of the exe\-cutive player $i\in L$ can be finally
	collected by a deciding player $\lambda$. The condition 
	$\sum_{\lambda\in\Lambda} \phi_\lambda^{(i)}=1$ is needed. In \cite{DA} we
	have discussed different payoff collection mechanisms, but in
	the following we will restrict ourselves to the natural result of a
	bargaining process between the	deciding players $\Lambda$ determined by
	the the Shapely value	(\cite{Shapley}). This particular choice has
	moreover the nice property to only depend on the conditional
	influences $P_{L|\Lambda}$.
    \end{itemize}
  \end{itemize}	
\end{definition}

\begin{remark}Stock Market \\
An easy application of this model is a stock market. The game $\Gamma = \langle
\Lambda, \Sigma, \nu \rangle$ is played by \emph{deciding players} $\Lambda$
(e.g.\ major stockholders). The graph represents mutual ow\-ner\-ships of companies
that pass the instructions of the deciding players via
(\ref{singlevote}) to the \emph{executive players}: These equations
represent a voting in each node, that weights the possessions of the direct
predecessors (respective direct owners) together with a small percentage $D$, the \emph{free float} of randomly voting minor stockholders. The
executive players $L$ play the game $G=\langle L, S, u \rangle$
according to the instructions they get - they act as agents and aren't players
in a game theoretical sense. The payoff which the executive players get is returned to
the deciding players weighted by the Shapely value: The more influence
deciding player $\lambda$ has on executive player $i$, the more is $\lambda$ getting of $i$'s payoff
$u_i$.  \\ \\ 
So we ask how the stock market and mutual ownerships of the intermediate
agents/ companies have altered the game $G$ to $\Gamma$.
Roughly we find that if the mutual dependency in decision making gets high, egoistic
strategies become unstable and mutual cooperation emerges. 
\end{remark}

\subsection{Conditional Influences}
We yet have to explain the function $P_{L|\Lambda}:\R\Sigma^\Lambda\rightarrow
\R\Sigma^L$. First consider a neighbourhood graph $\HH_{p,P}$ consisting of
a point $p$
with predecessors $P$. Suppose a {\it yes-no-}decision process, where
$\sigma_v^{(p)}\in \{\pm 1\}$ represents commands of each $v \in
P$ to $p$. The process shall be a vote in $p$, where every predecessor
$v$ has votes according to the weight $f_{vp}$ and a percentage of
$D\in \ ]0,1[$ votes randomly $\sim 
\mathcal{N}(0,\sigma_{\mathcal{N}}^2)$.

\begin{lemma}[Single Vote]
For the neighbourhood graph  $\HH_{p,P}$ the probability for a result
$+1$ in the point $p$ under some given condition
$(\sigma_v^{(p)})_{v \in P}$ is
$$P^{single}_{p|P}\left(\sigma_v^{(p)}=+1\ |\
(\sigma_v^{(p)})_{v \in P}\right) \ 
= 1-P_\mathcal{N}(0,\sigma_{\mathcal{N}}^2)(- C) \ \approx \
\frac{1}{2} - \frac{1}{2} \tanh(- aC)$$
\begin{equation}  \textnormal{at which } C=\frac{1-D}{D}\sum_{v \in P} f_{v p}\sigma_v^{(p)} \textnormal{ and  } a=\sqrt{\frac{2}{\pi \sigma_{\mathcal{N}}^2}}. \label{singlevote} \end{equation}
\end{lemma}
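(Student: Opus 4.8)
The plan is to read off the (real-valued) tally produced by the vote in the node $p$ as a sum of one deterministic part coming from the predecessors and one Gaussian part coming from the free float, and then to identify the event ``result $+1$'' with a one-sided tail of that Gaussian, the $\tanh$-formula being the standard mean-field approximation of the resulting Gaussian CDF.

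\textbf{Step 1 (the tally).} I would model the vote count in $p$, normalized so that the total voting weight is $1$, as
$$T \;=\; (1-D)\sum_{v\in P} f_{vp}\,\sigma_v^{(p)}\;+\;D\,\xi,\qquad \xi\sim\mathcal{N}(0,\sigma_{\mathcal{N}}^2),$$
where the first summand is the contribution of the predecessors, who jointly control the weight $1-D$ distributed among them proportionally to the normed weights $f_{vp}$, and the second summand $D\,\xi$ is the aggregate contribution of the fraction $D$ of randomly voting minor participants, their aggregate being idealized as a single centered Gaussian of variance $\sigma_{\mathcal{N}}^2$. The result declared in $p$ is $+1$ exactly when $T>0$; the boundary event $T=0$ has probability zero and can be ignored.

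\textbf{Step 2 (tail of the Gaussian) and Step 3 (the $\tanh$ approximation).} Dividing $T>0$ by $D>0$ rewrites ``result $+1$'' as the event $\xi> -\tfrac{1-D}{D}\sum_{v\in P} f_{vp}\,\sigma_v^{(p)}=-C$, so that $P^{single}_{p|P}\bigl(\sigma_v^{(p)}=+1\mid(\sigma_v^{(p)})_{v\in P}\bigr)=\Pr[\xi>-C]=1-\Pr[\xi\le -C]=1-P_{\mathcal{N}(0,\sigma_{\mathcal{N}}^2)}(-C)$, which is the exact expression claimed. For the approximation, use the symmetry of the centered Gaussian: $1-P_{\mathcal{N}(0,\sigma_{\mathcal{N}}^2)}(-C)=P_{\mathcal{N}(0,\sigma_{\mathcal{N}}^2)}(C)=\tfrac12\bigl(1+\mathrm{erf}(C/(\sqrt2\,\sigma_{\mathcal{N}}))\bigr)$, while $\tfrac12-\tfrac12\tanh(-aC)=\tfrac12\bigl(1+\tanh(aC)\bigr)$. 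Since $\mathrm{erf}$ and $\tanh$ are both odd and smooth, matching their derivatives at the origin gives the familiar approximation $\mathrm{erf}(x)\approx\tanh(\tfrac{2}{\sqrt\pi}x)$, hence $\mathrm{erf}(C/(\sqrt2\,\sigma_{\mathcal{N}}))\approx\tanh\bigl(\tfrac{2}{\sqrt\pi}\cdot\tfrac{C}{\sqrt2\,\sigma_{\mathcal{N}}}\bigr)=\tanh\bigl(\sqrt{2/(\pi\sigma_{\mathcal{N}}^2)}\,C\bigr)$, so that $a=\sqrt{2/(\pi\sigma_{\mathcal{N}}^2)}$, as stated.

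\textbf{Expected obstacle.} The computation itself is elementary; the substantive points are the two modeling conventions. First, one must justify that the fraction $D$ of randomly voting participants enters the tally as a single term of size $\mathcal{O}(D)$ rather than $\mathcal{O}(\sqrt D)$ — this is precisely where the ratio $\tfrac{1-D}{D}$ in the definition of $C$ originates, and it should be stated as a definition of the voting rule rather than derived. Second, one should be explicit that the replacement of $\mathrm{erf}$ by $\tanh$ is only a leading-order approximation, namely the same logistic/mean-field approximation that underlies the Ising picture invoked in Section 3. I would spell out these two conventions carefully and keep the rest of the argument to the few lines above.
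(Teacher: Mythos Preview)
Your proof is correct and follows essentially the same route as the paper: model the tally as $(1-D)\sum_v f_{vp}\sigma_v^{(p)}+D\xi$ with $\xi\sim\mathcal{N}(0,\sigma_{\mathcal{N}}^2)$, translate ``result $+1$'' into the one-sided Gaussian tail $\xi\ge -C$, and then replace the Gaussian CDF by the logistic one. The only difference is expository: you spell out the derivative-matching that pins down $a=\sqrt{2/(\pi\sigma_{\mathcal{N}}^2)}$, whereas the paper simply cites \cite{DA} for the $\mathrm{erf}\approx\tanh$ approximation; your remarks on the modelling conventions (the $D$ vs.\ $\sqrt{D}$ scaling and the status of the $\tanh$ replacement) are apt but go beyond what the paper records.
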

\begin{proof}
  Denote by $X^{(p)}\sim 
  \mathcal{N}(0,\sigma_{\mathcal{N}}^2)$ the Gaussian random variable of the
  random voters.
  \begin{align*}
   P^{single}_{p|P}\left(\sigma_v{(p)}=+1\ |\
    (\sigma_v^{(p)})_{v\in P}\right) & =P^{single}_{p|P}\left(DX^{(p)}+(1-D)\sum_{v \in P}
    f_{vp}\sigma_v^{(p)} \geq 0 \right) \\
  &=P^{single}_{p|P}\left(X^{(p)} \geq
    -\underbrace{\frac{1-D}{D}\sum_{v \in  P} 
    f_{vp}\sigma_v^{(p)}}_{C}\right) \\
  &= 1-P_\mathcal{N}(0,\sigma_{\mathcal{N}}^2)(- C)\\ 
  & \approx \ \frac{1}{2} - \frac{1}{2} \tanh(- aC)
  \textnormal{ with } a=\sqrt{\frac{2}{\pi \sigma_{\mathcal{N}}^2}}
  \end{align*}
  For approximation we use the similarity of the normal distribution and
  the tangens hyperbolicus (see \cite{DA}).
\end{proof}

Provided that $\HH$ does not contain directed cycles, the entire voting process goes on iteratively and we obtain straight-forward by induction for any $A\subset \HH_0,B\subset \HH$:

$$P_{B|A}(\sigma)=\sum_{\tau\in \{\pm1\}^{\HH_{V}},\;\tau|_A=\sigma}
    \tau|_B \cdot \prod_{p\in \HH_{V} \setminus \HH_0}P^{single}_{p|\pre(p)}
  \left(\tau|_p\;\bigg \vert \;\tau|_{\pre(p)}\right)$$

If $\HH$ does contain directed cycles then there is no terminating voting process. We nevertheless propose in complete analogy to statistical mechanics to assign in such a situation the conditional probabilities, which clearly reduce to the previous expression when no directed cycle is present:

  $$ P_{B|A}(\sigma)=\frac{1}{Z_{B|A}}
    \sum_{\tau\in \{\pm1\}^{\HH_{V}},\;\tau|_A=\sigma}
    \tau|_B \cdot \prod_{p\in \HH_{V} \setminus \HH_0}P^{single}_{p|\pre(p)}
  \left(\tau|_p\;\bigg \vert \;\tau|_{\pre(p)}\right)
	$$
	with the following now nontrivial normalization constant called partition function
  $$Z_{B|A}(\sigma):=
    \sum_{\tau\in \{\pm1\}^{\HH_{V}},\;\tau|_A=\sigma\;\;}
    \prod_{p\in \HH_{V} \setminus \HH_0}P^{single}_{p|\pre(p)}
  \left(\tau|_p\;\bigg \vert \;\tau|_{\pre(p)}\right).$$

This expression can be justified by a random experiment as follows:  
	Let the probability space be $\Omega:=\{\pm1\}^{\HH_{V}}$ with product measure 
	$$P(\tau):= \prod_{p\in \HH_{V} \setminus \HH_0}P^{single}_{p|\pre(p)}\left(\tau|_p\;\bigg \vert \;\tau|_{\pre(p)}\right).$$
	Take as events $\Omega_A(\sigma)\subset \Omega$ to be the event that holds $\tau|_A=\sigma$ and analogously for 			
	$\Omega_B(\sigma')$. Then the conditional probability for $\Omega_B(\sigma')$ under the condition $\Omega_A(\sigma)$ is 		
	defined as 		
	$$P_{B|A}=\frac{P\left(\Omega_B(\sigma')\cap \Omega_A(\sigma)\right)}{P\left(\Omega_A(\sigma)\right)}.$$
	Plugging in the product measure $P(\tau)$ and taking a formal linear combination over the outcome $\sigma'=\tau|_B$ yields 	
	the formula above.

\subsection{Payoff Mechanisms}
Once the conditional probabilities $P_{L|\Lambda}(\sigma)\in \R\Sigma^L$ for given
strategies $\sigma_\lambda$ of each deciding player $\lambda\in\Lambda$ have
been evaluated, this determines the behaviour of the executive players to 
$$\tau=\left(\pi\circ P_{L|\Lambda}\right)(\sigma)\in \R S^L.$$
This strategy produces in the game $G$ a payoff $u_i(\tau)$ for each executive
player $i\in L$. So how is this payoff collected finally by the executive players in $\Lambda$? 
Denote by $\phi_\lambda^{(i)}$ the percentage of payoff of executive player $i$ that is collected by deciding player $\lambda$, the \emph{payoff collecting mechanism}. Then the overall payoff function of the game $\Gamma$ is 
\begin{eqnarray}
 \nu_{\lambda} : \Sigma^{\Lambda} & \longrightarrow & {\R} \nonumber \\ 
 \nu_{\lambda} (\sigma) &=&  \left( \sum_{i \in L} \phi_{\lambda}^{(i)} \cdot u_i \right) \circ \pi \circ P_{L|\Lambda} (\sigma). \label{payoffmatrix}
\end{eqnarray}

\begin{example}[Payoff by shares]
The most intuitive payoff collecting mechanism for treelike hierarchy graphs is payoff proportional to the amount of shares of the executive player $i$ indirectly held by deciding player $\lambda$: 
\begin{align}
\phi_{\lambda}^{(i)} & =  \sum_{\textnormal{paths from } \lambda \textnormal { to }i} \
\ \prod_{vw \ \in \textnormal{ path}} f_{vw} \nonumber 
\end{align}
This fulfills $\sum_{\lambda \in \Lambda} \phi_{\lambda}^{(i)} = 1 \ \forall i$ (see \cite{DA}). An example for this payoff collecting me\-cha\-nism is the paying of dividends proportional to the amount of stocks held by an owner. 
\end{example}

However, in the following we will restrict ourselves to the natural result of a bar\-gai\-ning process between the deciding players $\Lambda$ according to their influence - the payoff
is hence determined by the Shapely value \cite{Shapley} of the obvious
coalition function: 
\begin{align*}
z_i:2^{\vert \Lambda \vert}& \longrightarrow {\R} \\ 
K & \longmapsto \frac{P(s_i=1|\sigma_\lambda^{(i)}=1 \ \forall \lambda \in K,
\sigma_\lambda^{(i)} =-1 \ \forall \lambda {\not\in} K) -
P(s_i=1|r_\lambda^{(i)} = -1 \ \forall \lambda)}{2P(s_i=1|r_\lambda^{(i)} = 1 \
\forall \lambda)-1} 
\end{align*}
This particular choice for a payoff collection mechanism has moreover the nice
pro\-perty to only depend on the conditional influences $P_{L|\Lambda}$.

\begin{lemma}
For the payoff function by Shapely value we get 
\begin{equation} \label{Shapley} 
\phi_\lambda^{(i)}=\sum_{K \subseteq \Lambda, \lambda \in K} \frac{(|K|-1)!(|\Lambda|-|K|)!}{|\Lambda|} \left(z_i(K)-z_i(K \setminus \{\lambda\})\right).
\end{equation}
\end{lemma}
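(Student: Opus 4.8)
The plan is to recognize that equation (\ref{Shapley}) is simply the \emph{definition} of the Shapley value applied to the coalition function $z_i$, so the lemma is essentially a matter of unwinding notation and verifying that the particular coalition function written just above the statement is the ``obvious'' one whose Shapley value gives the desired payoff-collecting mechanism. Concretely, I would first recall the standard formula for the Shapley value of a player $\lambda$ in a coalitional game with characteristic function $z\colon 2^{\Lambda}\to\R$, namely
\begin{equation*}
\phi_\lambda(z)=\sum_{K\subseteq\Lambda,\ \lambda\in K}\frac{(|K|-1)!\,(|\Lambda|-|K|)!}{|\Lambda|!}\bigl(z(K)-z(K\setminus\{\lambda\})\bigr),
\end{equation*}
and note that the stated formula matches this verbatim once we agree that the denominator $|\Lambda|$ in (\ref{Shapley}) is shorthand for $|\Lambda|!$ (or, if it is genuinely meant to be $|\Lambda|$, one absorbs a factor into the normalization; this is the one bookkeeping point to pin down). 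Hence the only real content is: (i) the map $z_i$ defined above is a bona fide coalition function, i.e.\ a well-defined real-valued map on $2^{\Lambda}$ with $z_i(\emptyset)=0$ and $z_i(\Lambda)=1$; and (ii) applying the Shapley formula to $z_i$ produces a payoff-collecting mechanism, i.e.\ the resulting $\phi_\lambda^{(i)}$ satisfy $\sum_{\lambda\in\Lambda}\phi_\lambda^{(i)}=1$.

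For step (i) I would check the normalizations directly from the definition of $z_i(K)$. When $K=\Lambda$, the numerator is $P(s_i=1\mid\sigma_\lambda^{(i)}=1\ \forall\lambda)-P(s_i=1\mid r_\lambda^{(i)}=-1\ \forall\lambda)$ and the denominator is $2P(s_i=1\mid r_\lambda^{(i)}=1\ \forall\lambda)-1$; under the symmetry of the underlying voting process (flipping all commands flips the sign of the bias $C$ in Lemma~\ref{singlevote}, so the ``$+1$ probability'' at all-$-1$ equals one minus the ``$+1$ probability'' at all-$+1$) the numerator equals the denominator, giving $z_i(\Lambda)=1$. When $K=\emptyset$, the numerator vanishes identically, so $z_i(\emptyset)=0$. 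That these are the only properties a coalition function must satisfy for the Shapley value to be defined is standard; finiteness of each $z_i(K)$ is clear provided the normalization $2P(s_i=1\mid r_\lambda^{(i)}=1\ \forall\lambda)-1\neq 0$, which holds generically (and which one should state as a standing assumption).

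For step (ii), the key is the \emph{efficiency} axiom of the Shapley value: $\sum_{\lambda\in\Lambda}\phi_\lambda(z)=z(\Lambda)-z(\emptyset)$ for any characteristic function $z$. Applying this to $z_i$ and using step (i) gives $\sum_{\lambda\in\Lambda}\phi_\lambda^{(i)}=z_i(\Lambda)-z_i(\emptyset)=1-0=1$, which is exactly the condition $\sum_{\lambda\in\Lambda}\phi_\lambda^{(i)}=1$ required of a payoff-collecting mechanism in the definition of $\Gamma$. Alternatively, one can verify efficiency by hand from the explicit formula via the classical telescoping argument over orderings of $\Lambda$: averaging $z_i(\{\text{predecessors of }\lambda\}\cup\{\lambda\})-z_i(\{\text{predecessors of }\lambda\})$ over all $|\Lambda|!$ orderings and summing over $\lambda$ collapses to $z_i(\Lambda)-z_i(\emptyset)$.

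The only genuine obstacle I anticipate is reconciling the factor in the denominator — the paper writes $|\Lambda|$ where the textbook Shapley formula has $|\Lambda|!$ — and making sure the symmetry argument that yields $z_i(\Lambda)=1$ is legitimate, i.e.\ that the conditional probabilities $P(s_i=1\mid\cdots)$ computed from the $P^{single}$ product indeed enjoy the sign-flip symmetry inherited from $\tanh$ being odd in Lemma~\ref{singlevote}. Everything else is a direct citation of the Shapley value's defining formula \cite{Shapley} and its efficiency axiom, so the ``proof'' will be short: state the coalition function, invoke the Shapley formula, and record that efficiency gives the normalization — with the symmetry check for $z_i(\Lambda)=1$ being the one computation worth spelling out.
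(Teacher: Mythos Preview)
Your approach is correct and matches the paper's: the paper's entire proof is the single sentence ``The necessary scaling condition $\sum_{\lambda\in\Lambda}\phi_\lambda^{(i)}=1$ is fulfilled (see \cite{Shapley}),'' i.e.\ it simply cites the efficiency axiom of the Shapley value, which is exactly what your step~(ii) does. Your proposal is considerably more thorough --- you actually verify $z_i(\emptyset)=0$ and $z_i(\Lambda)=1$ via the sign-flip symmetry of $\tanh$, and you flag the $|\Lambda|$ vs.\ $|\Lambda|!$ discrepancy --- whereas the paper leaves all of this implicit in the citation.
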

\begin{proof}
The necessary scaling condition $\sum_{\lambda \in \Lambda} \phi_{\lambda}^{(i)} = 1$ is fulfilled (see \cite{Shapley}). 
\end{proof}

\section{Hierarchical Games On Trees Are Ising Models}
In the following section we prove that the conditional influence can be calculated by using an isomorphic Ising model. For the common definition of the Ising model, see for example \cite{Ising}. Some
first analogies are obvious: \\ \\
\begin{tabular}{c|c}
Ising model & hierarchical game\\ 
\hline \hline 
particles in a graph $\HH$             & same graph $\HH$\\ \hline
spin of particle $v$:                     & strategy of $v \in \HH_V$:    \\ 
$\sigma_v \in \{\pm 1 \}$                 & $\sigma_v=(\sigma_v^{(1)} \ldots
\sigma_v^{(n)}) \in \lbrace              \pm 1 \rbrace ^n $ \\ \hline 
interaction $J_{vw}$                      & (with D modified) weights $f_{vw}
\cdot \frac{1-D}{D}$  \\ \hline 
external magnetic field (is set to $0$)   & some systematic bias (not treated) 
\end{tabular} \vspace{12pt} \\ 
In addition to that, $k$-point-functions are needed for conditional probabilities. 

\begin{definition} \label{def_k-point}
Let $A, B$ be disjoint subsets of the nodes in the Ising model. Let the
nodes beyond $A$, named $N(A)$, be the nodes of $\HH_V \setminus (A \cup B)$,
that fulfill the following condition: Every path to any $v' \in B$ hits
at least one $v \in A$. With nodes beyond $B$ (named $N(B)$) defined
analogously, the \emph{nodes between $A$ and $B$} are $N(A, B):=\HH_V
\setminus (N(A) \cup N(B))$. \\
With $k=\vert A \vert + \vert B \vert$ the \emph{$k$-point-function} is 
\begin{align}
\lbrace \pm 1 \rbrace ^{\vert B \vert} \times \lbrace \pm 1 \rbrace ^{\vert
A \vert}  \longrightarrow & \ \R \nonumber \\ 
(\sigma', \sigma)  \longmapsto & \ \langle \sigma' \ \vert \ N(A, B) \ \vert \ \sigma \rangle \nonumber \\ 
& = \sum_{\sigma', \sigma \textnormal{ fixed}} \exp(-\beta H_{N(A, B)})
\end{align}
where $H_{N(A, B)}$ is the Hamiltonian function of the restricted graph
just containing the nodes $N(A, B)$ and $\beta=\frac{1}{k_B T}$ is the
inverse temperature in the Ising model. 
\end{definition}

\begin{lemma} 
Let $\sigma_v \in \{ \pm 1 \}$ be the spin of particle $v$, $\sigma \in \{ \pm 1
\}^{\vert A \vert}$ the spins of particles in $A$ and let the external
magnetic field be $0$. Then the conditional probability for $\sigma_v$ given
$\sigma$ is 
\begin{equation}
P(\sigma_v =1 \ \vert \ \sigma) = \frac{\langle \sigma_v =1 \ \vert \ N_{A, \{v\}} \
\vert \ \sigma \rangle}{ \sum_{\sigma_v} \langle \sigma_v \ \vert \ N_{A, \{v\}}
\ \vert \ \sigma \rangle} \label{probability_k-point}
\end{equation}
\end{lemma}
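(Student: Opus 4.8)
The plan is to recognize the claimed identity as the definition of conditional probability in the Gibbs measure of the Ising model, restricted to the subgraph of nodes that actually matter for the dependence of $\sigma_v$ on $\sigma$. First I would recall that the Gibbs probability of a full configuration $\tau$ on $\HH_V$ is proportional to $\exp(-\beta H(\tau))$, so the conditional probability $P(\sigma_v=1\mid\sigma)$ is by definition the ratio of the marginal weight of the event $\{\sigma_v=1,\ \tau|_A=\sigma\}$ to the marginal weight of the event $\{\tau|_A=\sigma\}$, in each case summing $\exp(-\beta H(\tau))$ over all remaining spins. The denominator is then $\sum_{\sigma_v}$ of the numerator, which already matches the shape of the right-hand side of \eqref{probability_k-point}; so the whole statement reduces to showing that, after summing out all spins not in $A\cup\{v\}$, the marginal weight equals (up to a factor independent of $\sigma_v$, which cancels in the ratio) the $k$-point function $\langle \sigma_v \mid N(A,\{v\}) \mid \sigma\rangle$ from Definition \ref{def_k-point}.

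The key step is therefore a factorization of the Hamiltonian. I would split $H$ into the part $H_{N(A,\{v\})}$ supported on edges among the ``nodes between'' $A$ and $\{v\}$ (including their couplings to $A$ and to $v$), and the remainder supported on $N(A)\cup N(\{v\})$, the ``nodes beyond''. The defining property of $N(A)$ — every path from such a node to $v$ meets $A$ — is exactly what guarantees that, once the spins on $A$ are fixed to $\sigma$, the nodes beyond $A$ are conditionally independent of $\sigma_v$: no edge of $\HH$ directly couples a node of $N(A)$ to $v$ or to a node whose state still depends on $\sigma_v$ without passing through $A$. Summing out the spins on $N(A)\cup N(\{v\})$ then produces a factor that depends only on $\sigma$ (through the boundary $A$) and not on $\sigma_v$. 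That factor appears identically in numerator and denominator of the ratio and cancels, leaving precisely \eqref{probability_k-point}.

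The main obstacle is making the separation argument rigorous on a general graph: one must check that the edge set of $\HH$ really decomposes so that no single edge straddles the cut in a way that couples $\sigma_v$ to $N(A)$, and handle edges internal to $A$ and the couplings from $A$ into both halves. I would phrase this as: an edge $xy$ with $x\in N(A)$ forces $y\in N(A)\cup A$ (otherwise a path $v\leftrightarrow y\to x$ avoiding $A$ would contradict $x\in N(A)$), and symmetrically for $N(\{v\})$; hence the restricted Hamiltonian $H_{N(A,\{v\})}$ together with the ``beyond'' Hamiltonians exhausts all edges, and the beyond-parts, conditioned on $A=\sigma$, contribute a $\sigma_v$-independent multiplicative constant. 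Once this combinatorial separation is in place, the rest is the elementary Bayes-ratio computation, and the $\approx$ versus $=$ subtlety inherited from the $\tanh$-approximation in the Single Vote Lemma can be suppressed exactly as in \cite{DA}. A final sanity check is the acyclic case, where this recovers the inductive product formula for $P_{B|A}$ stated just before Definition \ref{def_k-point}.
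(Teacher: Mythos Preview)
The paper itself defers the entire proof to \cite{DA}, so there is nothing concrete to compare your argument against; your Bayes-ratio-plus-factorization approach is the natural one and is essentially what any direct proof must do.

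There is, however, one genuine gap in your plan. Your separation argument correctly handles $N(A)$: since $A$ separates $N(A)$ from $v$, summing out the spins in $N(A)$ with the $A$-spins fixed to $\sigma$ yields a factor that depends on $\sigma$ but not on $\sigma_v$. For $N(\{v\})$ the situation is reversed: its boundary is $v$, not $A$, so summing out those spins yields a factor $Z_{N(\{v\})}(\sigma_v)$ that \emph{does} depend on $\sigma_v$ and hence does not cancel in the ratio by separation alone. What saves you is precisely the hypothesis you never invoke, namely that the external magnetic field vanishes: then the map $\tau_i\mapsto -\tau_i$ for all $i\in N(\{v\})$ is a bijection of configurations that swaps the boundary condition $\sigma_v\leftrightarrow -\sigma_v$ while preserving the restricted Hamiltonian, so $Z_{N(\{v\})}(+1)=Z_{N(\{v\})}(-1)$ and the factor is in fact $\sigma_v$-independent. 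You should make this step explicit; without it the argument is incomplete and the zero-field hypothesis is unused.

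Two minor remarks. First, the comments about the $\tanh$-approximation from the Single Vote Lemma and about recovering the inductive $P_{B|A}$ formula do not belong here: this lemma is a statement purely about the Ising Gibbs measure, with no reference to the hierarchical game or to any approximation. Second, in your edge-decomposition you should also observe that no edge joins $N(A)$ directly to $N(\{v\})$ (any such edge would give a path from a node of $N(A)$ to $v$ avoiding $A$), so the three Hamiltonian pieces really are additively independent once the boundary spins on $A$ and on $v$ are fixed.
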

\begin{proof}
See \cite{DA}.
\end{proof}

\begin{remark}
The Ising model without external magnetic field and with constant interaction J
is exactly solvable in one dimension, see \cite{Ising}. 
In this case, the conditional probability for two particles $v$, $v'$ with
spins $\sigma$, $\sigma'$ and distance $a$ is 
\begin{equation}
P(\sigma' \ \vert \ \sigma) = \frac{\cosh^a(\beta J) \pm \sinh^a(\beta J)}{2
\cosh^a(\beta J)} \label{onedimension}
\end{equation}
where the case "$+$" occurs when $\sigma = \sigma'$ and "$-$" if $\sigma
\neq \sigma' $. 
This remark will be needed for the example in subsection \ref{example_step1}.
\end{remark}

\begin{theorem}
Every hierarchical game on a graph (as defined in section \ref{definition})
that fulfills the condition, that $\forall i \in L$ the restricted graph of the
nodes $N_{\Lambda, {i}}$ is a tree, is isomorphic to an Ising model such that
a process in the game that leads from fixed $\sigma_\lambda^{(i)} \ \forall
\lambda \in \Lambda$ to the strategy $\sigma_i^{(i)}=s_i$ of one $i \in L$ is
equivalent to a process in an Ising model on the same graph with interactions
$J_{vw}=f_{vw}\frac{1-D}{D}$, inverse temperature $\beta=\sqrt{\frac{2}{\pi 
\sigma_{\mathcal{N}}^2}}$ and no external magnetic field. 
\end{theorem}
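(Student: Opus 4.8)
Since the voting rule~(\ref{singlevote}) never mixes the strategy commands for different executive players---the bit $\sigma_v^{(i)}$ at a vertex $v$ is built from the bits $\sigma_u^{(i)}$ of its predecessors $u\in\pre(v)$ only---the whole computation decouples over $i\in L$. I would therefore fix one executive player $i\in L$ and work with scalar spins $\sigma_v\in\{\pm1\}$ on $\HH_V$. The claim then reduces to the single identity
\begin{equation*}
P_{\mathrm{game}}\bigl(s_i=1 \mid (\sigma_\lambda)_{\lambda\in\Lambda}\bigr)
= P_{\mathrm{Ising}}\bigl(\sigma_i=1 \mid (\sigma_\lambda)_{\lambda\in\Lambda}\bigr),
\end{equation*}
where the left side is produced by iterating~(\ref{singlevote}) along $\HH$ and the right side is the Ising conditional probability~(\ref{probability_k-point}) on the same graph with $J_{vw}=f_{vw}\frac{1-D}{D}$, $\beta=a=\sqrt{2/(\pi\sigma_{\mathcal{N}}^2)}$ and zero external field. (Only the product $\beta J_{vw}$ is determined by the game, so the split into the two factors is a convention; that the field vanishes is forced by the random vote $\mathcal{N}(0,\sigma_{\mathcal{N}}^2)$ being centred.)

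The engine of the argument is the observation that a single vote is \emph{already} a Gibbs conditional probability. Writing $h_p:=\sum_{v\in\pre(p)}J_{vp}\sigma_v$, so that $aC=\beta h_p$ in~(\ref{singlevote}), one has for $\sigma_p\in\{\pm1\}$
\begin{equation*}
P^{single}_{p|\pre(p)}\bigl(\sigma_p \mid (\sigma_v)_{v\in\pre(p)}\bigr)
= \frac{1+\sigma_p\tanh(\beta h_p)}{2}
= \frac{e^{\beta\sigma_p h_p}}{\sum_{\sigma'=\pm1}e^{\beta\sigma' h_p}}
= \frac{e^{\beta\sigma_p h_p}}{2\cosh(\beta h_p)} .
\end{equation*}
Substituting this into the inductive formula for $P_{B|A}$ turns the product $\prod_{p\in\HH_V\setminus\HH_0}P^{single}_{p|\pre(p)}$ into $e^{-\beta H_{\HH}(\tau)}$ divided by $\prod_{p\in\HH_V\setminus\HH_0}2\cosh(\beta h_p(\tau))$, where $H_{\HH}=-\sum_{vw\in\HH_E}J_{vw}\tau_v\tau_w$ is exactly the field-free Ising Hamiltonian of $\HH$. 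Hence $P_{\mathrm{game}}(s_i=1\mid\sigma)$ is a ratio of sums of $e^{-\beta H_{\HH}}$ over all spin configurations extending $\sigma$ (resp.\ $\sigma$ together with $s_i=1$), each configuration weighted by the reciprocal of the \emph{same} product of $\cosh$-factors.

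It remains to bring this ratio into the form~(\ref{probability_k-point}), and this is where the tree hypothesis is used. First, every sink $s$ of $\HH$ with $s\ne i$ can be summed out for free: since $s$ occurs in no $\pre(p)$, the only place $\tau_s$ appears is the numerator of its own factor, and $\sum_{\tau_s=\pm1}e^{\beta\tau_s h_s}=2\cosh(\beta h_s)$ cancels the attached $\cosh$-denominator, giving the factor $1$ independently of all other spins. Iterating this, every subtree ``hanging off'' the region $N_{\Lambda,\{i\}}$ between $\Lambda$ and $\{i\}$ contributes $1$, so the sums and products collapse to the subgraph $N_{\Lambda,\{i\}}$, which by assumption is a tree with distinguished vertices $\Lambda\cup\{i\}$. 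Second, on this tree one peels the remaining intermediate vertices leaf by leaf towards $i$; each elimination again produces a $2\cosh$ that pairs with a single-vote $\cosh$-denominator, at branch points precisely in the transfer-matrix pattern behind~(\ref{onedimension}). Because $i$ is a sink of $N_{\Lambda,\{i\}}$, it occurs in no $h_p$, so the product of all surviving $\cosh$-factors does not depend on $s_i$ and cancels between the numerator and the denominator of the ratio. What is left is exactly $\langle s_i=1 \mid N_{\Lambda,\{i\}} \mid \sigma\rangle$ divided by $\sum_{s_i}\langle s_i \mid N_{\Lambda,\{i\}} \mid \sigma\rangle$, which by~(\ref{probability_k-point}) equals $P_{\mathrm{Ising}}(\sigma_i=1\mid\sigma)$.

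The hard part is this last step: showing that, after the marginalization over the intermediate agents along the tree, all normalization constants $2\cosh(\beta h_p)$ coming from the single votes really do recombine into one $s_i$-independent prefactor. I would prove this by induction on $|N_{\Lambda,\{i\}}|$, removing one leaf of the subtree (other than $i$) at a time and tracking the accumulated prefactor, with base case $N_{\Lambda,\{i\}}=\Lambda\cup\{i\}$ (a single edge, or the star, computed directly). It is exactly here that acyclicity is indispensable: on a graph with an undirected cycle the product of single votes is not the restriction of any Gibbs measure, the $\cosh$-factors do not telescope, and $P_{B|A}$ genuinely has to be completed by hand with the partition function as in Section~\ref{definition}.
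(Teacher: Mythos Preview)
Your proof is correct and follows essentially the same route as the paper: the paper's argument (largely deferred to \cite{DA}) also uses the tree hypothesis to reduce everything to a local comparison at one node between the single-vote rule~(\ref{singlevote}) and the Gibbs conditional~(\ref{probability_k-point}), from which the identifications $J_{vw}=f_{vw}\frac{1-D}{D}$ and $\beta=\sqrt{2/(\pi\sigma_{\mathcal N}^2)}$ are read off. Your explicit rewriting of $P^{single}$ as $e^{\beta\sigma_p h_p}/(2\cosh\beta h_p)$ and the leaf-by-leaf elimination of the $\cosh$-factors is a more detailed unpacking of what the paper phrases as ``the process fixates the strategies step by step,'' but the core idea is the same.
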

Isomorphic hereby means that the conditional influence and the $k$-point-functions coincide. 
\begin{proof}
For details, see section 4 in \cite{DA}. \\ 
As the graph had to be specified to a tree-like graph, the process fixates the
strategies step by step. Therefore is it enough to look at the local fixation of
a strategy in one node and to compare (\ref{singlevote}) and
(\ref{probability_k-point}). Doing this gives the interactions and the inverse
temperature as mentioned above. 
\end{proof}

\begin{remark}
The restriction to tree-like graphs seems harsh on the first sight. However even a one-dimensional model shows interesting behaviour (see subsection \ref{example_step1}) and also the calculation of Ising models on random graphs as done in \cite{Dommers} typically requires the graph to be at least locally treelike. 
\end{remark}

\section{Solution Of The Hierarchical Game}

\subsection{Steps For The Solution}
Let $\Gamma = \langle \Lambda, \Sigma, \nu \rangle$ be a hierarchical game as
defined in section \ref{definition} with $\Lambda=\{\lambda_1, \lambda_2\}$ and
$L=\{1,2\}$. (The restriction to two players allows the use of payoff matrices.)

\begin{enumerate}

\item \textbf{Determining the conditional influence} \\
With two players in each $\Lambda$ and $L$, the conditional influence
depends on only four variables obtained from the Ising model and
equation (\ref{probability_k-point}).
\begin{align}
\textnormal{conditional influence on 1}: x& = P_{L|\Lambda}(s_1=1 \ \vert \
\sigma_{\lambda_1}^{(1)}=-1, \sigma_{\lambda_2}^{(1)}=1) \nonumber \\
y & = P_{L|\Lambda}(s_1=1 \ \vert \ \sigma_{\lambda_1}^{(1)}=1, \sigma_{\lambda_2}^{(1)}=1)
\nonumber \\
\textnormal{conditional influence on 2}: \overline{x}& = P_{L|\Lambda}(s_2=1 \ \vert \
\sigma_{\lambda_1}^{(2)}=1, \sigma_{\lambda_2}^{(2)}=-1) \nonumber \\
\overline{y} & = P_{L|\Lambda}(s_2=1 \ \vert \ \sigma_{\lambda_1}^{(2)}=1,
\sigma_{\lambda_2}^{(2)}=1) \nonumber 
\end{align}

\item \textbf{Building up the pre-payoff matrix}\\
The pre-payoff matrix assigns to every combination of strategies $\sigma=(\sigma_{\lambda_1},\sigma_{\lambda_2})$ in $\Sigma^\Lambda$ the expected payoff $u_i$ to executive player $i \in L$ in game $G$:
\begin{align}
\E [(u_1,u_2) \ \vert \ \sigma] & = (u_1,u_2) \circ \pi \circ P_{L | \Lambda} (\sigma) \nonumber \\
& = \sum_{s \in
S} P_{L|\Lambda}(s \ \vert \ \sigma_{\lambda_1}, \sigma_{\lambda_2}) \cdot u(s) \nonumber
\\ 
& = \sum_{s_1} \sum_{s_2} P_{L|\Lambda}(s_1 \ \vert \ \sigma_{\lambda_1}^{(1)},
\sigma_{\lambda_2}^{(1)}) \cdot P_{L|\Lambda}(s_2 \ \vert \ \sigma_{\lambda_1}^{(2)},
\sigma_{\lambda_2}^{(2)})\cdot u(s_1,s_2) \nonumber 
\end{align}

\item \textbf{Building up the payoff matrix}\\
To get the payoff matrix $\nu_\lambda$ for $\Gamma$ as defined in equation (\ref{payoffmatrix}) multiply every item in the pre-payoff matrix with 
$$\left( \begin{array}{cc} \phi_{\lambda_1}^{(1)} & \phi_{\lambda_2}^{(1)} \\ \phi_{\lambda_1}^{(2)} &
\phi_{\lambda_2}^{(2)} \end{array} \right)=\left( \begin{array}{cc} \frac{y-x}{2y-1} &
\frac{\overline{x}+\overline{y}-1}{2\overline{y}-1} \\ \frac{x+y-1}{2y-1} &
\frac{\overline{y}-\overline{x}}{2\overline{y}-1} \end{array} \right).$$

\item \textbf{Usual methods}\\
Now that there is a payoff matrix for the game, the usual game theoretical
methods can be applied to find Nash equilibria and phase
transitions.

\end{enumerate}

\subsection{Example To Step 1: Easy Hierarchical Graph} \label{example_step1}

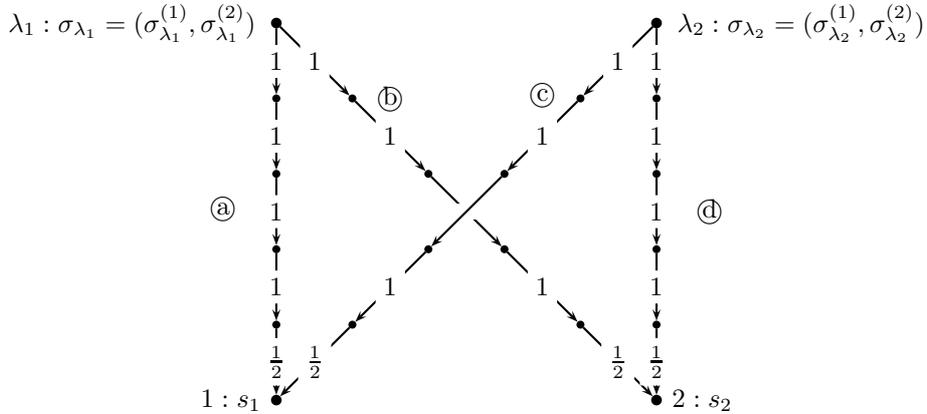
\begin{figure}[H]
\begin{center}
\begin{pspicture}(-2,1.5)(12.5,6)%
\psset{xunit=1cm, yunit=1cm, runit=1cm} 
\cnode*(2,1){2pt}{C}
\cnode*(7,1){2pt}{D}
\cnode*(2,6){2pt}{A}
\cnode*(7,6){2pt}{B}
\rput(0.1,6){\rnode{A'}{\(\lambda_1:\sigma_{\lambda_1}=(\sigma_{\lambda_1}^{(1)}
,\sigma_{\lambda_1}^{(2)})\)}}
\rput(1.4,1){\rnode{C'}{\(1:s_1\)}}
\rput(7.6,1){\rnode{D'}{\(2:s_2\)}}
\rput(8.9,6){\rnode{B'}{\(\lambda_2:\sigma_{\lambda_2}=(\sigma_{\lambda_2}^{(1)}
,\sigma_{\lambda_2}^{(2)})\)}}

\cnode*(2,5){1.5pt}{V}
\cnode*(3,5){1.5pt}{V'}
\cnode*(6,5){1.5pt}{V''}
\cnode*(7,5){1.5pt}{V'''}

\cnode*(2,4){1.5pt}{W}
\cnode*(4,4){1.5pt}{W'}
\cnode*(5,4){1.5pt}{W''}
\cnode*(7,4){1.5pt}{W'''}

\cnode*(2,3){1.5pt}{X}
\cnode*(5,3){1.5pt}{X'}
\cnode*(4,3){1.5pt}{X''}
\cnode*(7,3){1.5pt}{X'''}

\cnode*(2,2){1.5pt}{Y}
\cnode*(6,2){1.5pt}{Y'}
\cnode*(3,2){1.5pt}{Y''}
\cnode*(7,2){1.5pt}{Y'''}

\ncline{->}{A}{V}\ncput*{1}
\ncline{->}{V}{W}\ncput*{1}
\ncline{->}{W}{X}\ncput*{1}
\ncline{->}{X}{Y}\ncput*{1}
\ncline{->}{Y}{C}\ncput*{$\frac{1}{2}$}

\ncline{->}{A}{V'}\ncput*{1}
\ncline{->}{V'}{W'}\ncput*{1}
\ncline{->}{W'}{X'}\ncput*{ }
\ncline{->}{X'}{Y'}\ncput*{1}
\ncline{->}{Y'}{D}\ncput*{$\frac{1}{2}$}

\ncline{->}{B}{V''}\ncput*{1}
\ncline{->}{V''}{W''}\ncput*{1}
\ncline{->}{W''}{X''}
\ncline{->}{X''}{Y''}\ncput*{1}
\ncline{->}{Y''}{C}\ncput*{$\frac{1}{2}$}

\ncline{->}{B}{V'''}\ncput*{1}
\ncline{->}{V'''}{W'''}\ncput*{1}
\ncline{->}{W'''}{X'''}\ncput*{1}
\ncline{->}{X'''}{Y'''}\ncput*{1}
\ncline{->}{Y'''}{D}\ncput*{$\frac{1}{2}$}

\rput(1.3,3.5){\rnode{a}{\textcircled{a}}}
\rput(7.7,3.5){\rnode{c}{\textcircled{d}}}
\rput(3.5,5){\rnode{b}{\textcircled{b}}}
\rput(5.5,5){\rnode{d}{\textcircled{c}}}
\end{pspicture}
\end{center}
\caption{One-dimensional hierarchical game}
\end{figure}
In the one-dimensional hierarchical game, the deciding and executive players are
connected by chains of  $a,b,c$ or $d$ edges. Therefore the weights are $1$,
except down at the executive players (weights $\frac{1}{2}$). Let $D$ be
$\frac{1}{2}$. 
The conditional influence can now be calculated with equation
(\ref{onedimension}) which leads to 
\begin{align}
x&= P_{L|\Lambda}(s_1=1 \ | \ \sigma_{\lambda_1}^{(1)}=-1, \sigma_{\lambda_2}^{(1)}=1)=
\frac{\mu(-,a) \cdot \mu(+,c)}{\mu(-,a) \cdot \mu(+,c)+\mu(+,a) \cdot \mu(-,c)}
\nonumber \\ 
y&= P_{L|\Lambda}(s_1=1 \ | \ \sigma_{\lambda_1}^{(1)}=1, \sigma_{\lambda_2}^{(1)}=1)=
\frac{\mu(+,a) \cdot \mu(+,c)}{\mu(+,a) \cdot \mu(+,c)+\mu(-,a) \cdot \mu(-,c)}
\nonumber 
\end{align}
where
$\mu(\pm,k)=\cosh^{k-1}(\beta)\cosh(\frac{\beta}{2})\pm\sinh^{k-1}
(\beta)\sinh(\frac{\beta}{2})$. \\

\begin{figure}[H]

\includegraphics[width=6.5cm]{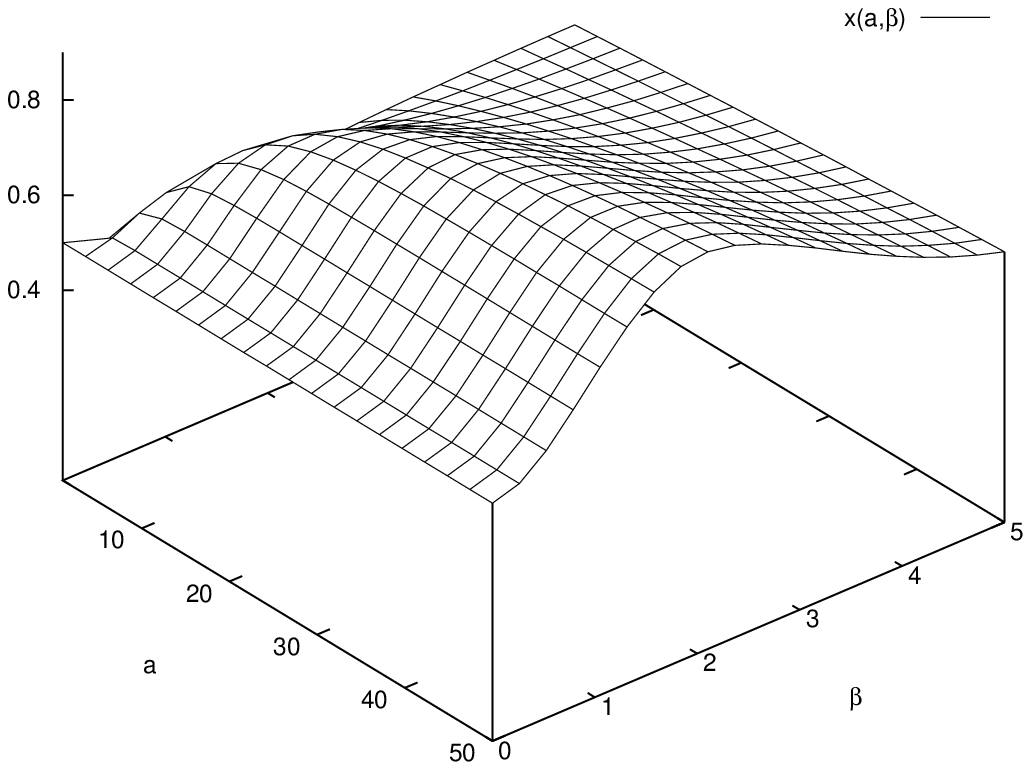}	   
\includegraphics[width=6.5cm]{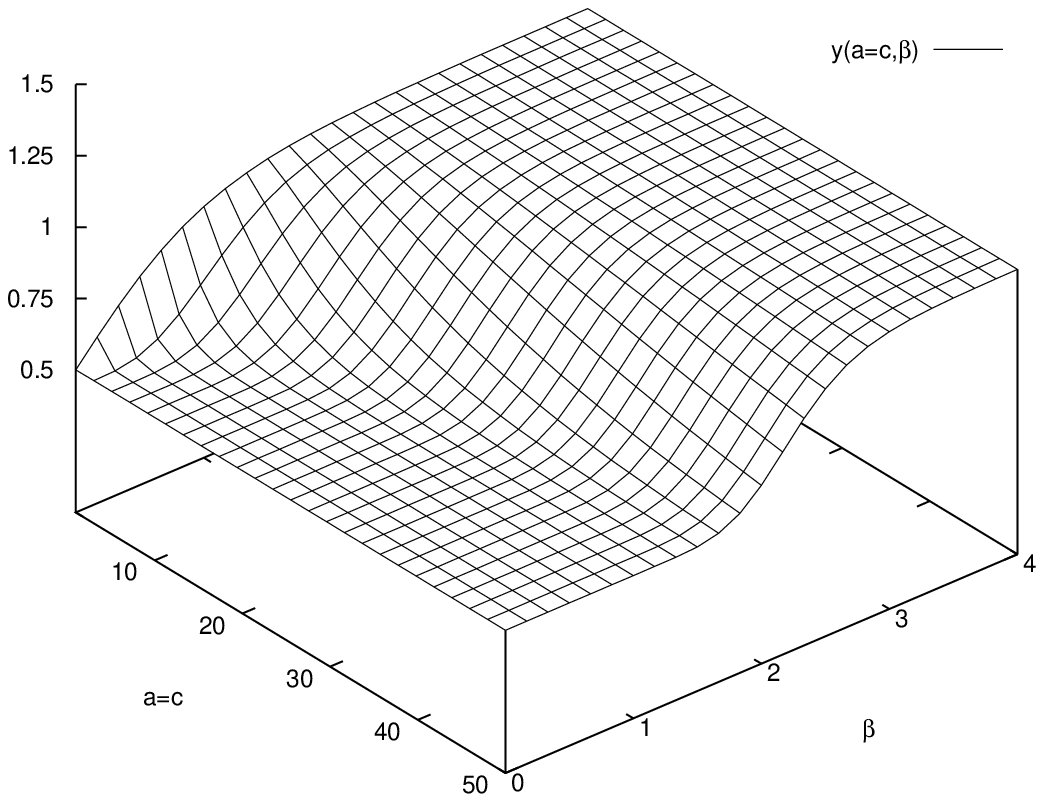}    

\caption{Conditional influence $x$ with fixed $c$ (left) and conditional influence $y$ with $a=c$ (right)}
\end{figure}

As $x$ is a measure for player $1$ obeying rather $\lambda_2$ than $\lambda_1$ if their
instructions differ, $x$ is growing if $a$ increases and the influence of
$\lambda_1$ therefore decreases as it can be seen on the left. However, for small $\beta$ (which means high
temperature $T$) $x$ is close to $\frac{1}{2}$ no matter how far the deciding
players are from each other. As $y$ shows how much player $1$ is likely to obey $\lambda_1$ and $\lambda_2$ if they
agree, the right graph shows how $y$ is close to $1$ if both deciding players are near to $1$ at a low temperature. If the distance and the temperature increase, $1$ tends to
choose its strategy randomly with probability $\frac{1}{2}$.

\subsection{Example To Step 2-4: Prisoner's Dilemma} \label{example_step2-4}
Let the game G be the well-known prisoner's dilemma with payoff matrix
\begin{equation}
\begin{array}{c||c|c} & C{\textnormal{\tiny{(ooperation)}}} & D{\textnormal{\tiny{(efection)}}} \\ \hline \hline  C & (1,1) & (-3,3) \\ \hline D &
(3,-3) & (-1,-1) \end{array} \ \ . \nonumber 
\end{equation}
Let the conditional influence be symmetric, so it goes down to just two
variables $x$ and $y$:
\begin{align}
x& = P_{L|\Lambda}(s_1=C \ \vert \ \sigma_{\lambda_1}^{(1)}=D,
\sigma_{\lambda_2}^{(1)}=C)=\overline{x} \nonumber \\
y& = P_{L|\Lambda}(s_1=C \ \vert \ \sigma_{\lambda_1}^{(1)}=C,
\sigma_{\lambda_2}^{(1)}=C)=\overline{y} \nonumber 
\end{align}
The complete payoff matrix has been calculated in \cite{DA}. \\
Depending on $x$ and $y$ the hierarchical game $\Gamma$ is isomorphic to one of the following games with unique Nash equilibrium $\hat \sigma$ and the tipping points for these three states are $x=\frac{2-y}{3}$ and $x=\frac{y+1}{3}$:

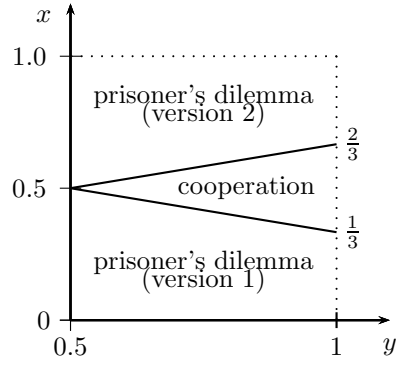
\begin{figure}[H]
\begin{pspicture}(-1,-0.3)(12,4)
\psset{xunit=7cm, yunit=3.5cm, Dy=0.5}
\rput(1.1,-0.1){\rnode{y}{\(y\)}}
\rput(0.45,1.15){\rnode{x}{\(x\)}}
\rput(0.75,0.85){\rnode{a}{prisoner's dilemma}}
\rput(0.75,0.78){\rnode{a'}{(version 2)}}
\rput(0.83,0.5){\rnode{b}{cooperation}}
\rput(0.75,0.22){\rnode{c}{prisoner's dilemma}}
\rput(0.75,0.15){\rnode{c'}{(version 1)}}
\rput(1.03,0.333333){\rnode{d}{\(\frac{1}{3}\)}}
\rput(1.03,0.666666){\rnode{d'}{\(\frac{2}{3}\)}}
\rput(0.5,-0.1){\rnode{e}{\(0.5\)}}
\rput(0.45,0){\rnode{e''}{\(0\)}}
\rput(1,-0.1){\rnode{e'}{\(1\)}}
\psline(1,0.03)(1,-0.03)
\psaxes[showorigin=false]{->}(0.5,0)(1.1,1.2)
\psline(0.5,0.5)(1,0.3333333)
\psline(0.5,0.5)(1,0.6666666)
\psline[linestyle=dotted](0.5,1)(1,1)
\psline[linestyle=dotted](1,0)(1,1)
\end{pspicture}
\caption{Illustration of the situation of $\Gamma$ depending on $x$ and $y$}
\end{figure}

\begin{enumerate}
\item $\Gamma$ is a \textbf{prisoner's dilemma} where   
\begin{align*}
\text{ for }\lambda_1:\ &(C,C)\simeq \textnormal{Cooperation}, \ (D,C)\simeq
\textnormal{Defection} \\
\text{ for }\lambda_2:\ &(C,C)\simeq \textnormal{Cooperation}, \ (C,D)\simeq
\textnormal{Defection} 
\end{align*}
That means, $\lambda_1$ identifies with $1$ and $\lambda_2$ identifies with $2$. \\
Hence the unique Nash equilibrium is $\hat \sigma = (\hat \sigma_{\lambda_1}, \hat \sigma_{\lambda_2})=((D,C),(C,D))$.

\item $\Gamma$ is a \textbf{prisoner's dilemma} where 
\begin{align*}
\text{ for }\lambda_1:\ &(C,C)\simeq \textnormal{Cooperation}, \ (C,D)\simeq
\textnormal{Defection} \\
\text{ for }\lambda_2:\ &(C,C)\simeq \textnormal{Cooperation}, \ (D,C)\simeq
\textnormal{Defection}
\end{align*}
That means, $\lambda_1$ identifies with $2$ and $\lambda_2$ identifies with $1$.\\
Hence the unique Nash equilibrium is $\hat \sigma =((C,D),(D,C))$.
\item \textbf{Cooperation}: \\ 
For $\lambda_1$ and $\lambda_2$, the strategy $(C,C)$ dominates every other
strategy and therefore $\hat \sigma=((C,C),(C,C))$.
\end{enumerate}

Because of the symmetry the payoffs of $\lambda_1$ and $\lambda_2$ coincide: $\nu_{\lambda_1}^{x,y}(\hat \sigma)=\nu_{\lambda_2}^{x,y}(\hat \sigma)$.
Hence the value of the game, i.e.\ the payoff in the Nash equilibrium $\hat \sigma$, is as follows: 
\begin{align}
\nu_{\lambda_1}^{x,y}(\hat \sigma)&=\left\lbrace \begin{array}{cl} -1+2x & x<\frac{2-y}{3} \\ -1+2y &
\frac{2-y}{3}<x<\frac{y+1}{3} \\ 1-2x & x>\frac{y+1}{3} \end{array} \right. 
\nonumber 
\end{align}

\begin{remark}
  In the 1-dimensional hierarchy considered in section \ref{example_step1} we had
  \begin{align}
  x&= 
  \frac{\mu(-,a) \cdot \mu(+,c)}{\mu(-,a) \cdot \mu(+,c)+\mu(+,a) \cdot
  \mu(-,c)}
  \nonumber \\ 
  y&= 
  \frac{\mu(+,a) \cdot \mu(+,c)}{\mu(+,a) \cdot \mu(+,c)+\mu(-,a) \cdot
  \mu(-,c)}
  \nonumber 
  \end{align}
  where $\mu(\pm,k)=\cosh^{k-1}(\beta)\cosh(\frac{\beta}{2})\pm\sinh^{k-1}
  (\beta)\sinh(\frac{\beta}{2})$ and the inverse temperature
  $\beta=\sqrt{\frac{2}{\pi\sigma_{\mathcal{N}}^2}}$ depended on the
  random minority voters.\\

  Hence in the one-dimensional hierarchy we get the tipping points above, but
  $x$ and $y$ are still smooth functions in $\beta$. On the contrary, for a
  two-dimensional hierarchy, $x$ and $y$ would exhibit proper non-analytical
  phase transitions in the thermodynamic limit, turning the tipping points into
  proper phase transitions.
\end{remark}

\section{Open Questions}

\begin{question}
	If we choose the Shapely value as payoff mechanism as above, the overall transformed 
	game $\Gamma$ depends only on the game $G$ and the correlators. What can be said in 
	general about the game theory of $\Gamma$ compared to $G$ \emph{without explicit 
	knowledge} of the correlators (under some reasonable, general assumptions)? 
\end{question}
\begin{question}
	It would be interesting to derive closed expressions for the correlators of an Ising 
	model on a locally treelike random graph, similarly to the partition functions obtained 
	in this case in \cite{Dommers}; it is to be expected that e.g. the $2$-point correlator 
	depends only on the distance. This would yield a very nice \emph{explicitly solvable 
	model} with phase transition for games on randomly dependent agents.
\end{question}
\begin{question}
	Our model does not necessarily require the graph to be a directed tree, see end of 
	section 2.1. In fact, mutual dependencies might be more realistic. Then the following 
	issues arise:
	\begin{itemize}
	\item Even in the easiest case, the partition sum does (to our surprise) not coincide 
	with the partition sum of the Ising model. Rather, there are corrections for every 
	directed loop. It would be nice to explain this behaviour and/or derive expressions for 
	the partition sum, phase transition etc. in this modified versions using the same 
	techniques from statistical physics as for the Ising model (transfer matrix for small 
	dimension, mean field method for large dimension resp. branching number).
	\item Alternatively, one might introduce a relaxation time, so the model gains a time 
	dependence. This could be interesting to study non-stationary behaviour.
	\end{itemize}
\end{question}
\begin{question}
	Can there be obtained statistical real-world evidence (and quantified), that the 
	existence of inter-dependency on the path between deciding players and actual decision
	(as modelled in this article) increases cooperation?
\end{question}

\end{document}